\documentclass[10pt,twocolumn]{article}

\usepackage[T1]{fontenc}
\usepackage[utf8]{inputenc}
\usepackage{amsmath,amssymb,amsthm}
\usepackage{graphicx}
\usepackage{booktabs}
\usepackage{url}
\usepackage{hyperref}
\usepackage{textcomp}
\usepackage[margin=0.75in,top=0.75in,bottom=1in,columnsep=0.25in]{geometry}
\usepackage{titlesec}
\usepackage{caption}
\usepackage{cite}

\titleformat{\section}{\centering\normalfont\scshape}{\Roman{section}.}{0.5em}{}
\titleformat{\subsection}{\normalfont\itshape}{\Alph{subsection}.}{0.5em}{}
\titleformat{\subsubsection}{\normalfont\itshape}{\arabic{subsubsection})}{0.5em}{}
\titlespacing*{\section}{0pt}{12pt plus 2pt minus 2pt}{6pt plus 1pt minus 1pt}
\titlespacing*{\subsection}{0pt}{8pt plus 2pt minus 2pt}{4pt plus 1pt minus 1pt}
\titlespacing*{\subsubsection}{0pt}{6pt plus 2pt minus 2pt}{3pt plus 1pt minus 1pt}

\hypersetup{colorlinks=true,linkcolor=black,citecolor=black,urlcolor=blue}

\newtheorem{theorem}{Theorem}
\newtheorem{proposition}[theorem]{Proposition}
\newtheorem{corollary}[theorem]{Corollary}
\newtheorem{definition}[theorem]{Definition}
\newtheorem{assumption}{Assumption}

\begin{document}

\twocolumn[
\begin{@twocolumnfalse}
\begin{center}
{\Large\bfseries Forecastability as an Information-Theoretic Limit on Prediction\par}
\vspace{12pt}
{\large Peter Maurice Catt\par}
\vspace{4pt}
{\normalsize Independent Researcher\par}
\vspace{2pt}
{\normalsize Auckland, New Zealand\par}
\vspace{18pt}
\end{center}
\end{@twocolumnfalse}
]

\noindent\textbf{\textit{Abstract}---}Forecasting is usually framed as a problem of model choice. This paper starts earlier, asking how much predictive information is available at each horizon. Under logarithmic loss, the answer is exact: the mutual information between the future observation and the declared information set equals the maximum achievable reduction in expected loss~\cite{delsole2004predictability}. This paper develops the consequences of that identity. Forecastability, defined as this mutual information evaluated across horizons, forms a profile whose shape reflects the dependence structure of the process and need not be monotone. Three structural properties are derived: compression of the information set can only reduce forecastability; the gap between the profile under a finite lag window and the full history gives an exact truncation error budget; and for processes with periodic dependence, the profile inherits the periodicity. Predictive loss decomposes into an irreducible component fixed by the information structure and an approximation component attributable to the method; their ratio defines the exploitation ratio, a normalised diagnostic for method adequacy. The exact equality is specific to log loss, but when forecastability is near zero, classical inequalities imply that no method under any loss can materially improve on the unconditional baseline. The framework provides a theoretical foundation for assessing, prior to any modelling, whether the declared information set contains sufficient predictive information at the horizon of interest.

\vspace{6pt}
\noindent\textbf{\textit{Index Terms}---}Forecastability profile, Mutual information, Conditional entropy, Log loss, Prediction limits, Data processing inequality, Exploitation ratio, Forecasting methodology.

\vspace{12pt}

\section{Introduction}

Forecasting is usually framed as a problem of model choice. This paper starts earlier. It asks how much predictive information the available data contain at each horizon, and what that implies for the best accuracy any method can achieve. The question has antecedents in Wiener's theory of prediction~\cite{wiener1949extrapolation} and Granger's characterisation of predictive relations~\cite{granger1969investigating}, but neither framework provides an exact, model-free, horizon-specific bound under a proper scoring rule.

The central result is that, under logarithmic loss, the mutual information $I(Y_{t+h}; \mathcal{I}_t)$ between the future observation and the declared information set is exactly the maximum achievable reduction in expected loss relative to the unconditional baseline. This identity is established in the predictability literature by DelSole~\cite{delsole2004predictability}; the present paper develops its consequences for forecasting practice. It is not an approximation or an asymptotic bound; it is an exact equality, achieved when the forecasting method coincides with the true conditional distribution. The quantity $I(Y_{t+h}; \mathcal{I}_t)$, evaluated across horizons, defines the forecastability profile: a horizon-resolved function that maps each forecast horizon to the maximum predictive improvement available from the declared information set.

Forecastability is not a scalar quantity but a structure over horizons. The shape of the profile, not any single number, determines where modelling effort is justified and where it is not.

Prior work has studied predictability using information-theoretic quantities, but in different forms and for different purposes. In climate science, DelSole~\cite{delsole2004predictability} and DelSole and Tippett~\cite{delsole2007predictability} characterise predictability within a given dynamical model, building on the signal-to-noise framework of Leith~\cite{leith1978predictability}---their framework conditions on a specified dynamical model, whereas the present paper conditions on a declared information set without assuming a model. Bialek et al.~\cite{bialek2001predictability} study predictive information as a complexity quantity over observation length, extending ideas from computational mechanics~\cite{crutchfield1989inferring,shalizi2001computational}; Goerg~\cite{goerg2013foreca} ranks components by spectral entropy; Pennekamp et al.~\cite{pennekamp2019intrinsic}, Ponce-Flores et al.~\cite{ponceflores2020complexity}, and Papacharalampous et al.~\cite{papacharalampous2022massive} use entropy-based measures descriptively to relate time series structure to realised forecast performance. Related model-free approaches include Garland et al.~\cite{garland2014modelfreequant} on permutation entropy-based quantification and Sugihara and May~\cite{sugihara1990nonlinear} on nonlinear forecasting as a diagnostic for chaos. None develops the identity into a systematic, model-free, horizon-resolved diagnostic framework conditional on a declared information set. The underlying information-theoretic results are standard~\cite{cover2006elements,kullback1959information}; the contribution is their assembly into a coherent framework for forecasting, together with the structural properties of the forecastability profile (Corollary~5, Propositions~6 and~7) and the conceptual vocabulary (forecastability profile, exploitability, exploitation ratio) needed to make them operational.

The theoretical foundations rest on the entropy and mutual information framework of classical information theory~\cite{shannon1948mathematical,fano1961transmission,renyi1961measures,cover2006elements,gray2011entropy}, the divergence-based approach to statistical discrimination~\cite{kullback1959information,csiszar2011information}, the theory of proper scoring rules and their connection to probabilistic forecasting~\cite{gneiting2007strictly}, and prior work establishing forecastability as an information-theoretic diagnostic for time series~\cite{catt2009forecastability,catt2014entropy,catt2026diagnostic}.

The paper is organised as follows. Section~II defines the setting. Section~III establishes the information bound. Section~IV introduces the forecastability profile and derives its structural properties, including the data processing inequality corollary, the finite-window information loss, and a periodicity result, illustrated by two Gaussian examples. Section~V gives the loss decomposition. Section~VI characterises behaviour at horizons of low forecastability. Section~VII discusses estimation. Section~VIII develops implications for practice. Section~IX concludes.

\section{Setting}

Let $\{Y_t\}_{t \in \mathbb{Z}}$ be a stochastic process with values in a measurable space $(\mathcal{Y}, \mathcal{B})$. At time~$t$, the forecaster uses a declared information set $\mathcal{I}_t$, represented as a sub-$\sigma$-algebra. All results are conditional on this choice: different information sets imply different limits on achievable performance.

The true predictive object is the conditional distribution $p_h^\star(y \mid \mathcal{I}_t) = P(Y_{t+h} \in dy \mid \mathcal{I}_t)$ at horizon $h \geq 1$. A forecasting method supplies an approximation $q_h(\cdot \mid \mathcal{I}_t)$. Performance is evaluated under logarithmic loss, $\ell(q, y) = -\log q(y)$. Logarithmic loss is used here not because it exhausts all forecasting objectives, but because it is the setting in which predictive information admits an exact and transparent decision-theoretic interpretation: it is strictly proper, equals cross-entropy in expectation, and has a direct coding interpretation~\cite{gneiting2007strictly,gneiting2014probabilistic,shannon1948mathematical,cover2006elements}. The connection between proper scoring rules and Bayesian decision theory is developed in Dawid~\cite{dawid1986probability} and the relationship between calibration and refinement in DeGroot and Fienberg~\cite{degroot1983comparison}.

The mutual information between the future and the information set is
\begin{equation}
I(Y_{t+h}; \mathcal{I}_t) = H(Y_{t+h}) - H(Y_{t+h} \mid \mathcal{I}_t),
\end{equation}
which measures the reduction in uncertainty about $Y_{t+h}$ obtainable from $\mathcal{I}_t$~\cite{cover2006elements,fano1961transmission}. The variation of this quantity across horizons constitutes the forecastability profile of the process. The information set may include lagged values, exogenous variables, covariates, or external signals; additional variables improve forecasting only insofar as they increase $I(Y_{t+h}; \mathcal{I}_t)$.

\subsection{Regularity conditions}

The following conditions are assumed throughout.

\begin{assumption}
(i)~All distributions admit densities with respect to a common dominating measure $\mu$ on $(\mathcal{Y}, \mathcal{B})$, and $q_h$ is absolutely continuous with respect to $p_h^\star$ wherever KL divergence is invoked. (ii)~The entropies $H(Y_{t+h})$, $H(Y_{t+h} \mid \mathcal{I}_t)$, and the expected KL divergence $\mathbb{E}[D_{\mathrm{KL}}(p_h^\star \| q_h)]$ are finite. (iii)~When the forecastability profile is treated as a function of horizon alone, the process $\{Y_t\}$ is stationary~\cite{doob1953stochastic}, so that $I(Y_{t+h}; \mathcal{I}_t)$ does not depend on~$t$.
\end{assumption}

Condition~(i) ensures that the log score and KL divergence are well-defined; the absolute continuity requirement follows standard conventions in information theory~\cite{pinsker1964information,billingsley1995probability}. In the continuous case, all entropies are differential entropies defined relative to the dominating measure~$\mu$; the mutual information $I(Y_{t+h}; \mathcal{I}_t)$ is invariant to the choice of $\mu$ and is therefore well-defined without ambiguity~\cite{cover2006elements}. Condition~(ii) excludes pathological cases where entropy is infinite. Condition~(iii) is sufficient for Proposition~4 and for the interpretation of the forecastability profile as a function of horizon alone. It is not necessary for the remaining results: all formal results other than Proposition~4 hold pointwise at each $(t, h)$ without stationarity, and require only that $I(Y_{t+h}; \mathcal{I}_t)$ is well-defined under conditions~(i) and~(ii). When the process is not stationary, the forecastability profile should be interpreted as local in time and indexed by both~$h$ and~$t$.

\section{The Information Bound on Prediction}

\begin{theorem}[Conditional-distribution optimality]
For any predictive distribution $q_h(\cdot \mid \mathcal{I}_t)$ satisfying Assumption~1,
\begin{equation}
\mathbb{E}[-\log q_h(Y_{t+h} \mid \mathcal{I}_t)] = H(Y_{t+h} \mid \mathcal{I}_t) + \mathbb{E}[D_{\mathrm{KL}}(p_h^\star \| q_h)].
\end{equation}
Consequently,
\begin{equation}
\mathbb{E}[-\log q_h(Y_{t+h} \mid \mathcal{I}_t)] \geq H(Y_{t+h} \mid \mathcal{I}_t),
\end{equation}
with equality if and only if $q_h = p_h^\star$ almost surely.
\end{theorem}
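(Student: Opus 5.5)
The plan is to prove the identity pointwise, conditionally on $\mathcal{I}_t$, and then take an outer expectation; the inequality and the equality case then follow from Gibbs' inequality. Write $p^\star = p_h^\star(\cdot\mid\mathcal{I}_t)$ and $q = q_h(\cdot\mid\mathcal{I}_t)$ for the conditional densities with respect to $\mu$, which exist by Assumption~1(i). By the tower property,
\[
\mathbb{E}[-\log q(Y_{t+h})] = \mathbb{E}\bigl[\,\mathbb{E}[-\log q(Y_{t+h}) \mid \mathcal{I}_t]\,\bigr],
\]
and the inner conditional expectation equals $-\int p^\star(y)\log q(y)\,\mu(dy)$, because $p^\star$ is by definition the conditional law of $Y_{t+h}$ given $\mathcal{I}_t$. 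On the set where $p^\star>0$ I will add and subtract $\log p^\star(y)$ inside the integral. This gives the conditional cross-entropy as the sum of the conditional entropy $-\int p^\star\log p^\star\,d\mu$ and $\int p^\star\log(p^\star/q)\,d\mu = D_{\mathrm{KL}}(p^\star\|q)$. Taking the outer expectation, the first term integrates to $H(Y_{t+h}\mid\mathcal{I}_t)$, using the standard definition of conditional (differential) entropy as the expected entropy of the conditional law. This yields the displayed identity.

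The step I expect to need the most care is justifying that split. Separating the integral of a sum into a sum of integrals and then exchanging the outer expectation requires integrability, not just the existence of each term. Here Assumption~1(ii) does the work: $H(Y_{t+h}\mid\mathcal{I}_t)$ and $\mathbb{E}[D_{\mathrm{KL}}]$ are finite. The integrand $p^\star\log(p^\star/q)$ may take either sign, but its negative part is controlled: since $\log x \le x-1$, we have $p^\star\log(q/p^\star)\le q-p^\star$, which is integrable. So each conditional KL term is well-defined in $[0,\infty]$ and is finite almost surely. Similarly, $-\int p^\star\log p^\star$ is finite almost surely and integrable under (ii). Hence the sum of two quasi-integrable terms, at least one of them integrable, can be split legitimately, and the identity holds in the extended reals with both sides finite.

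The absolute continuity clause in (i) needs a remark. For the KL term to be finite one needs $p^\star \ll q$, so that $q>0$ wherever $p^\star>0$. I will read the assumption as guaranteeing this, equivalently as the finiteness of $\mathbb{E}[D_{\mathrm{KL}}]$ in (ii). Points where $p^\star=0$ contribute nothing to either integral.

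For the consequence, I will invoke Gibbs' inequality: $D_{\mathrm{KL}}(p^\star\|q)\ge 0$, which follows from Jensen's inequality applied to the concave logarithm, $-\int p^\star\log(q/p^\star)\ge -\log\int_{p^\star>0} q \ge 0$. Equality holds if and only if $q=p^\star$ $\mu$-a.e. Taking expectations gives the inequality. For the equality case, $\mathbb{E}[D_{\mathrm{KL}}]=0$ holds with a nonnegative integrand if and only if $D_{\mathrm{KL}}(p^\star\|q)=0$ almost surely in $\mathcal{I}_t$. By the equality case of Jensen's inequality for the strictly concave logarithm, this means $q=p^\star$ $\mu$-a.e. for almost every realisation of $\mathcal{I}_t$, which is the meaning of $q_h=p_h^\star$ almost surely. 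The converse direction is immediate from the identity.
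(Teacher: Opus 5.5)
Your proposal is correct and follows essentially the paper's route. Like the paper, you add and subtract $\log p_h^\star$ inside the expected log loss, identify the two pieces as $H(Y_{t+h}\mid\mathcal{I}_t)$ and $\mathbb{E}[D_{\mathrm{KL}}(p_h^\star\|q_h)]$, and conclude from the non-negativity of KL and its equality case. You supply rigour the paper omits: the tower-property conditioning, the integrability check via $\log x\le x-1$, and the point that a finite KL term requires $p_h^\star\ll q_h$, which is the direction actually needed, rather than the $q_h\ll p_h^\star$ literally stated in Assumption~1(i).
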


\begin{proof}
\begin{align}
&\mathbb{E}[-\log q_h(Y_{t+h} \mid \mathcal{I}_t)] \notag\\
&= \mathbb{E}[-\log p_h^\star(Y_{t+h} \mid \mathcal{I}_t)] + \mathbb{E}\!\left[\log \frac{p_h^\star(Y_{t+h} \mid \mathcal{I}_t)}{q_h(Y_{t+h} \mid \mathcal{I}_t)}\right] \notag\\
&= H(Y_{t+h} \mid \mathcal{I}_t) + \mathbb{E}[D_{\mathrm{KL}}(p_h^\star \| q_h)].
\end{align}
Since $D_{\mathrm{KL}}(p_h^\star \| q_h) \geq 0$ with equality iff $p_h^\star = q_h$ a.s., the bound follows.
\end{proof}

Theorem~1 identifies the benchmark: the best achievable loss is the conditional entropy, and every other method pays an additional KL penalty. The non-negativity of KL divergence, central to the proof, is a consequence of Jensen's inequality; its role as the unique divergence consistent with the log score was established by Bernardo~\cite{bernardo1979expected} and is reviewed in Gr\"{u}nwald and Dawid~\cite{grunwald2004gametheoretic}. The marginal predictor is the unconditional distribution $p(Y_{t+h})$, which ignores $\mathcal{I}_t$ and achieves expected log loss $H(Y_{t+h})$.

\begin{theorem}[Information bound on prediction]
The maximum achievable reduction in expected log loss from exploiting $\mathcal{I}_t$, relative to the marginal predictor, is exactly
\begin{equation}
H(Y_{t+h}) - H(Y_{t+h} \mid \mathcal{I}_t) = I(Y_{t+h}; \mathcal{I}_t).
\end{equation}
This bound is achieved if and only if $q_h = p_h^\star$ almost surely.
\end{theorem}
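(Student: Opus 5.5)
The plan is to reduce Theorem~2 directly to Theorem~1 by comparing two expected log losses: the loss of an arbitrary method $q_h$, and the loss of the marginal predictor.

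First I will pin down the baseline. The marginal predictor $p(Y_{t+h})$ is $\mathcal{I}_t$-measurable only trivially, since it is constant in the conditioning information, so it is itself an admissible $q_h$. Its expected log loss is $\mathbb{E}[-\log p(Y_{t+h})] = H(Y_{t+h})$, which is finite by Assumption~1(ii). For any admissible $q_h$, the reduction in expected log loss relative to this baseline is
\[
R(q_h) = H(Y_{t+h}) - \mathbb{E}[-\log q_h(Y_{t+h}\mid\mathcal{I}_t)].
\]

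Next I will substitute the decomposition of Theorem~1:
\[
R(q_h) = H(Y_{t+h}) - H(Y_{t+h}\mid\mathcal{I}_t) - \mathbb{E}[D_{\mathrm{KL}}(p_h^\star\|q_h)].
\]
Because the expected KL term is non-negative, $R(q_h) \le H(Y_{t+h}) - H(Y_{t+h}\mid\mathcal{I}_t)$. By the definition of mutual information in equation~(1), this upper bound equals $I(Y_{t+h};\mathcal{I}_t)$.

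To show the bound is attained, I will take $q_h = p_h^\star$, which is admissible under Assumption~1. The KL term then vanishes and $R(p_h^\star) = I(Y_{t+h};\mathcal{I}_t)$. Hence the supremum of $R$ over methods is a maximum and equals the mutual information exactly.

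For the ``if and only if'' part, I will argue in both directions. Equality $R(q_h) = I(Y_{t+h};\mathcal{I}_t)$ holds exactly when $\mathbb{E}[D_{\mathrm{KL}}(p_h^\star\|q_h)] = 0$. Since the integrand is non-negative, this forces $D_{\mathrm{KL}}(p_h^\star\|q_h) = 0$ almost surely in $\mathcal{I}_t$. By the equality case already recorded in Theorem~1, that in turn means $q_h = p_h^\star$ almost surely. The converse is the attainment step above.

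The only delicate point I expect is bookkeeping with infinities and differential entropies. Subtracting the entropies is legitimate because Assumption~1(ii) makes both of them finite. The case where a method $q_h$ violates absolute continuity needs separate handling: there the expected loss can be $+\infty$, so $R(q_h) = -\infty$ and the bound holds trivially. I will also note that, in the continuous case, the difference of differential entropies is invariant to the choice of $\mu$, as remarked after Assumption~1, so the identification with $I(Y_{t+h};\mathcal{I}_t)$ is unambiguous.
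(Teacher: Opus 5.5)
Your proposal is correct and follows the paper's route. The paper likewise takes the marginal predictor's loss $H(Y_{t+h})$ as the baseline, invokes Theorem~1 for the minimum loss $H(Y_{t+h}\mid\mathcal{I}_t)$, identifies the difference with $I(Y_{t+h};\mathcal{I}_t)$ via~(1), and leaves the equality case to Theorem~1, with your explicit KL-subtraction step, attainment at $q_h = p_h^\star$, and infinite-loss bookkeeping simply spelling out what its three-line proof leaves implicit.
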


\begin{proof}
The marginal predictor achieves expected log loss $H(Y_{t+h})$. The minimum achievable loss using $\mathcal{I}_t$ is $H(Y_{t+h} \mid \mathcal{I}_t)$ by Theorem~1. The difference is $I(Y_{t+h}; \mathcal{I}_t)$ by~(1).
\end{proof}

The bound is tight, but generally unattainable in practice because $p_h^\star$ is unknown. The gap between the bound and achieved performance is the approximation error $\mathbb{E}[D_{\mathrm{KL}}(p_h^\star \| q_h)]$~\cite{kullback1951information}. Theorem~2 recovers, in the present notation, the identity established by DelSole~\cite{delsole2004predictability} in the context of climate predictability; the contribution of this paper lies not in the identity itself but in the structural consequences developed in Sections~IV--VI. Convergence rates for the approximation term under various model classes are studied in Barron~\cite{barron1991minimum} and Yang and Barron~\cite{yang1999information}. A direct consequence is the horizon-specific bound: for any method $q_h$,
\begin{equation}
H(Y_{t+h}) - \mathbb{E}[-\log q_h(Y_{t+h} \mid \mathcal{I}_t)] \leq F(h; \mathcal{I}_t),
\end{equation}
since the left-hand side equals $F(h; \mathcal{I}_t) - \mathbb{E}[D_{\mathrm{KL}}(p_h^\star \| q_h)] \leq F(h; \mathcal{I}_t)$ by non-negativity of KL divergence.

\section{Forecastability Profiles and Their Properties}

\subsection{Forecastability}

\begin{definition}[Forecastability]
The forecastability of the process $\{Y_t\}$ at horizon $h \geq 1$ relative to information set $\mathcal{I}_t$ is
\begin{equation}
F(h; \mathcal{I}_t) = I(Y_{t+h}; \mathcal{I}_t).
\end{equation}
\end{definition}

By Theorem~2, forecastability is the maximum number of nats by which any probabilistic method can improve on the marginal predictor at horizon~$h$, under logarithmic loss. It is defined relative to the declared information set and is therefore not an intrinsic property of the time series alone. Changes in the composition of $\mathcal{I}_t$ will, in general, change the forecastability profile. Hereafter, $F(h; \mathcal{I}_t)$ is used throughout; $I(Y_{t+h}; \mathcal{I}_t)$ appears only when the chain rule or data processing inequality is applied directly to the mutual information.

The profile should be interpreted as a horizon-resolved structure rather than a scalar quantity. Its shape reflects the temporal dependence in the data and may be non-monotone. Fraser and Swinney~\cite{fraser1986independent} introduced lagged mutual information for embedding dimension selection in dynamical systems; Li~\cite{li1990mutual} established the relationship between mutual information functions and correlation functions. In the present framework, the profile serves a different purpose: in many practical series, forecastability is concentrated at specific lags, particularly seasonal periods, rather than distributed smoothly across horizons.

\subsection{The entropy rate and maximum forecastability}

For a stationary process with the full history as information set, $\mathcal{I}_t = \sigma(Y_t, Y_{t-1}, \ldots)$, the forecastability at the shortest horizon has a direct interpretation in terms of the entropy rate.

\begin{proposition}[Maximum forecastability and entropy rate]
Let $h_\infty = \lim_{n \to \infty} H(Y_n \mid Y_{n-1}, \ldots, Y_1)$ denote the entropy rate of the stationary process $\{Y_t\}$ (assumed finite under Assumption~1), and let $\mathcal{I}_t = \sigma(Y_t, Y_{t-1}, \ldots)$. Then
\begin{equation}
F(1; \mathcal{I}_t) = H(Y_{t+1}) - h_\infty.
\end{equation}
\end{proposition}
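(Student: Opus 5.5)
By the definition of forecastability and the mutual information identity~(1), $F(1;\mathcal{I}_t) = H(Y_{t+1}) - H(Y_{t+1} \mid Y_t, Y_{t-1}, \ldots)$. So it suffices to prove
\begin{equation*}
H(Y_{t+1} \mid Y_t, Y_{t-1}, \ldots) = h_\infty .
\end{equation*}
The plan has two stages. First, express both sides as limits of finite-history conditional entropies. Second, show that these limits agree, which requires passing from finite windows to the infinite past.

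\textbf{Step 1 (stationarity).} By Assumption~1(iii), $H(Y_n \mid Y_{n-1}, \ldots, Y_1) = H(Y_{t+1} \mid Y_t, \ldots, Y_{t-n+2})$. Hence $h_\infty = \lim_{k\to\infty} a_k$, where
\begin{equation*}
a_k = H(Y_{t+1} \mid Y_t, \ldots, Y_{t-k+1}).
\end{equation*}
Conditioning reduces entropy, so $a_k$ is nonincreasing in $k$. The limit exists: it is finite by assumption, and $a_k$ is bounded below by $H(Y_{t+1}\mid \mathcal{I}_t)$, which is finite by Assumption~1(ii).

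\textbf{Step 2 (finite windows to the full past).} The same monotonicity gives $a_k \ge H(Y_{t+1}\mid\mathcal{I}_t)$ for every $k$, hence $h_\infty \ge H(Y_{t+1}\mid \mathcal{I}_t)$. For the reverse inequality, rewrite it as the statement
\begin{equation*}
I(Y_{t+1}; Y_t,\ldots,Y_{t-k+1}) \uparrow I(Y_{t+1}; \mathcal{I}_t).
\end{equation*}
Write $\mathcal{G}_k = \sigma(Y_t,\ldots,Y_{t-k+1})$. These are increasing $\sigma$-algebras with $\bigvee_k \mathcal{G}_k = \mathcal{I}_t$. I would invoke the standard continuity of mutual information under increasing $\sigma$-algebras (Pinsker~\cite{pinsker1964information}; Gray~\cite{gray2011entropy}). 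It follows from the Dobrushin/Gelfand--Yaglom--Perez characterisation of $I$ as a supremum over finite partitions: any finite partition of $\mathcal{I}_t$ can be approximated by partitions measurable with respect to some $\mathcal{G}_k$, because the union of the $\mathcal{G}_k$ is a generating algebra. Then $I(Y_{t+1};\mathcal{G}_k) \to I(Y_{t+1};\mathcal{I}_t)$. Subtracting from the finite quantity $H(Y_{t+1})$ gives $a_k \to H(Y_{t+1}\mid\mathcal{I}_t)$, and therefore $h_\infty = H(Y_{t+1}\mid\mathcal{I}_t)$.

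\textbf{Step 3 (conclusion).} Substitute this equality into (1) to obtain $F(1;\mathcal{I}_t) = H(Y_{t+1}) - h_\infty$.

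\textbf{Main obstacle.} The main difficulty is Step~2. One must justify defining $H(Y_{t+1}\mid\mathcal{I}_t)$ for an infinite conditioning $\sigma$-algebra and show that it is the limit of the finite-window quantities. In the differential-entropy setting, conditional entropy alone is not well behaved under limits. I would therefore work throughout with mutual information, which is invariant to $\mu$ and continuous from below along increasing $\sigma$-algebras, and use the finiteness in Assumption~1(ii) only at the end to convert back to entropies. An alternative route for Step~2 is martingale convergence of the conditional densities $p(y\mid\mathcal{G}_k) \to p(y\mid\mathcal{I}_t)$ combined with a uniform integrability argument, but the partition characterisation avoids needing extra integrability conditions.
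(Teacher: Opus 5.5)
Your proof is correct and has the same skeleton as the paper's: via (1), everything reduces to showing $H(Y_{t+1}\mid Y_t,Y_{t-1},\ldots)=h_\infty$, which is then substituted back.

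The difference lies in how that identity is handled. The paper asserts it ``by definition of the entropy rate.'' However, the statement defines $h_\infty$ as the limit of the finite-window quantities $H(Y_n\mid Y_{n-1},\ldots,Y_1)$. Under that definition, the identity is exactly the step that needs proof, because it interchanges a limit with conditioning on the infinite past. You supply the missing argument in two parts:
\begin{itemize}
\item Your Step~1 uses stationarity to rewrite $h_\infty$ as $\lim_k H(Y_{t+1}\mid Y_t,\ldots,Y_{t-k+1})$.
\item Your Step~2 uses the Dobrushin--Pinsker continuity of mutual information along the increasing $\sigma$-algebras $\mathcal{G}_k\uparrow\mathcal{I}_t$ to identify that limit with $H(Y_{t+1}\mid\mathcal{I}_t)$.
\end{itemize}

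Working with mutual information throughout, and returning to entropies only by subtracting from the finite $H(Y_{t+1})$, is the right way to make this rigorous in the differential-entropy setting. The partition-supremum characterisation delivers the continuity with no integrability conditions beyond Assumption~1. A side benefit is that the finiteness of $h_\infty$, which the statement assumes separately, follows automatically from Assumption~1(ii).

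The paper's version buys brevity. Yours buys an honest justification of the finite-to-infinite-past passage, which the paper leaves implicit.
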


\begin{proof}
For a stationary process, $H(Y_{t+1} \mid \mathcal{I}_t) = H(Y_{t+1} \mid Y_t, Y_{t-1}, \ldots) = h_\infty$ by definition of the entropy rate. Therefore $F(1; \mathcal{I}_t) = H(Y_{t+1}) - H(Y_{t+1} \mid \mathcal{I}_t) = H(Y_{t+1}) - h_\infty$.
\end{proof}

The entropy rate $h_\infty$ is the irreducible uncertainty per observation that remains when the full past is known~\cite{cover2006elements,gray2011entropy}. For stationary ergodic processes, the Shannon--McMillan--Breiman theorem~\cite{breiman1957individual,barron1985strong} ensures that sample averages of log-likelihood converge to the entropy rate almost surely. Hence $H(Y_{t+1}) - h_\infty$ gives the maximum one-step forecastability and anchors the profile at its shortest horizon.

\subsection{Information-set compression and the data processing inequality}

Any transformation or compression of the information set cannot increase forecastability. This is a direct consequence of the data processing inequality~\cite{cover2006elements}.

\begin{corollary}[Forecastability under information-set compression]
Let $T(\mathcal{I}_t)$ be any measurable function of the information set. Then
\begin{equation}
F(h; T(\mathcal{I}_t)) \leq F(h; \mathcal{I}_t)
\end{equation}
for all $h \geq 1$. Equality holds if and only if $T$ is a sufficient statistic for $Y_{t+h}$ given $\mathcal{I}_t$ in the sense of Fisher~\cite{fisher1922mathematical} and Halmos and Savage~\cite{halmos1949application}.
\end{corollary}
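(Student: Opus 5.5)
The plan is to recognise the statement as the data processing inequality applied to the Markov chain $Y_{t+h} \to \mathcal{I}_t \to T(\mathcal{I}_t)$, and to derive it directly from the chain rule for mutual information. Writing $T = T(\mathcal{I}_t)$ for the generated sub-$\sigma$-algebra, we have $\sigma(T) \subseteq \mathcal{I}_t$ because $T$ is measurable with respect to $\mathcal{I}_t$. Hence $T$ is conditionally independent of $Y_{t+h}$ given $\mathcal{I}_t$, which gives the Markov structure. I will then expand $I(Y_{t+h}; \mathcal{I}_t, T)$ in two ways:
\begin{equation*}
I(Y_{t+h}; \mathcal{I}_t, T) = I(Y_{t+h}; \mathcal{I}_t) + I(Y_{t+h}; T \mid \mathcal{I}_t) = I(Y_{t+h}; T) + I(Y_{t+h}; \mathcal{I}_t \mid T).
\end{equation*}
The Markov property makes $I(Y_{t+h}; T \mid \mathcal{I}_t) = 0$, so
\begin{equation*}
F(h;\mathcal{I}_t) = F(h; T(\mathcal{I}_t)) + I(Y_{t+h}; \mathcal{I}_t \mid T).
\end{equation*}
Conditional mutual information is non-negative, being an expected KL divergence. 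This yields the inequality for every $h \ge 1$, with a pointwise reading in $t$ as permitted by Assumption~1. Equivalently, one can phrase the argument through entropies: conditioning on the coarser $\sigma$-algebra cannot lower conditional entropy, so $H(Y_{t+h}\mid T) \ge H(Y_{t+h}\mid \mathcal{I}_t)$. Subtracting both sides from $H(Y_{t+h})$ gives the same conclusion. This second form ties the result to Theorem~1: the best predictor built from $T$ is a particular $\mathcal{I}_t$-measurable predictive distribution, so its loss is at least $H(Y_{t+h}\mid\mathcal{I}_t)$.

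For the equality clause, equality holds iff $I(Y_{t+h}; \mathcal{I}_t \mid T) = 0$, i.e. iff $Y_{t+h}$ and $\mathcal{I}_t$ are conditionally independent given $T$. This means $Y_{t+h} \to T \to \mathcal{I}_t$ is also Markov, or equivalently $p_h^\star(\cdot\mid\mathcal{I}_t) = P(Y_{t+h}\in\cdot\mid T)$ almost surely. That is exactly the statement that $T$ is sufficient for $Y_{t+h}$ relative to $\mathcal{I}_t$: the conditional law of the target given the full information depends on it only through $T$. This is the predictive analogue of Fisher/Halmos--Savage sufficiency, with $Y_{t+h}$ playing the role of the parameter. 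The vanishing step relies on the fact that the expected KL divergence $\mathbb{E}[D_{\mathrm{KL}}(p_h^\star \,\|\, P(Y_{t+h}\in\cdot\mid T))]$ is zero iff the two conditional laws agree a.s.

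The main obstacle is measure-theoretic rather than conceptual. Since $\mathcal{I}_t$ may be an infinite-dimensional $\sigma$-algebra, such as a full history, the chain rule and the "zero iff conditional independence" characterisation must be justified for general $\sigma$-algebras. My first attempt will use the Pinsker/Dobrushin definition of mutual information as a supremum over finite partitions, under which the chain rule and DPI hold in full generality. I will rely on Assumption~1(ii) to ensure all terms are finite, so that the subtraction above is legitimate. I will also rely on Assumption~1(i) so that regular conditional densities exist and the KL characterisation of equality applies.
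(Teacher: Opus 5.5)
Your proposal is correct and takes the same approach as the paper: the paper simply invokes the data processing inequality for the chain $Y_{t+h} - \mathcal{I}_t - T(\mathcal{I}_t)$ and attributes the equality case to sufficiency. Your chain-rule derivation adds what the paper leaves implicit. It gives the exact gap $I(Y_{t+h}; \mathcal{I}_t \mid T)$ and the precise equality condition $Y_{t+h} \perp \mathcal{I}_t \mid T$ (equivalently $p_h^\star = P(Y_{t+h}\in\cdot\mid T)$ a.s.), which states the ``sufficiency condition'' more sharply than the paper's own wording does.
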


\begin{proof}
By the data processing inequality, $F(h; T(\mathcal{I}_t)) \leq F(h; \mathcal{I}_t)$ for any measurable~$T$. Equality holds iff the Markov chain $Y_{t+h} - \mathcal{I}_t - T(\mathcal{I}_t)$ satisfies the sufficiency condition~\cite{cover2006elements}.
\end{proof}

This result has immediate consequences: feature extraction, dimensionality reduction, lag truncation, and any other summarisation of the raw information set can only reduce the forecastability profile. The profile under the raw information set is therefore an upper bound on what any reduced representation can achieve.

\subsection{Finite-window information loss}

Even when the full history $\mathcal{I}_t = \sigma(Y_t, Y_{t-1}, \ldots)$ is available, forecasters typically condition on a finite lag window $\mathcal{I}_t^{(p)} = \sigma(Y_t, Y_{t-1}, \ldots, Y_{t-p+1})$ for some $p \geq 1$. The information lost by this truncation is itself a mutual information quantity.

\begin{proposition}[Finite-window information loss]
For a stationary process satisfying Assumption~1, the forecastability lost by truncating the information set from the full history to a $p$-lag window is
\begin{align}
&F(h; \mathcal{I}_t) - F(h; \mathcal{I}_t^{(p)}) \notag\\
&\quad= I(Y_{t+h}; Y_{t-p}, Y_{t-p-1}, \ldots \mid Y_t, \ldots, Y_{t-p+1}).
\end{align}
This quantity is non-negative and equals zero if and only if $Y_{t+h}$ is conditionally independent of the remote past given the $p$ most recent observations.
\end{proposition}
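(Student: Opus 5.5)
The identity follows from the chain rule for mutual information. Write $W_t = (Y_t, \ldots, Y_{t-p+1})$ for the recent window and $R_t = (Y_{t-p}, Y_{t-p-1}, \ldots)$ for the remote past. Since the full history $\sigma(Y_t, Y_{t-1}, \ldots)$ is generated jointly by $W_t$ and $R_t$, we have $F(h;\mathcal{I}_t) = I(Y_{t+h}; W_t, R_t)$ and $F(h;\mathcal{I}_t^{(p)}) = I(Y_{t+h}; W_t)$. The chain rule gives
\begin{equation*}
I(Y_{t+h}; W_t, R_t) = I(Y_{t+h}; W_t) + I(Y_{t+h}; R_t \mid W_t).
\end{equation*}
Subtracting $F(h;\mathcal{I}_t^{(p)})$ from both sides yields the displayed identity. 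Stationarity is used only to make both sides independent of $t$.

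The step needing care is that $R_t$ is infinite-dimensional, so the chain rule cannot simply be quoted from the finite-dimensional case. I would first apply it to truncated histories: for $n > p$,
\begin{align*}
&I(Y_{t+h}; Y_t,\ldots,Y_{t-n+1}) = I(Y_{t+h}; W_t) \\
&\quad + I(Y_{t+h}; Y_{t-p},\ldots,Y_{t-n+1} \mid W_t).
\end{align*}
I would then let $n\to\infty$. Mutual information with respect to an increasing sequence of $\sigma$-algebras converges to the mutual information with the generated $\sigma$-algebra (Dobrushin/Pinsker continuity; see Gray). This holds on the left side and, conditionally on $W_t$, on the right side.

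An alternative route uses the entropy form in (1). Here $F(h;\mathcal{I}_t) - F(h;\mathcal{I}_t^{(p)}) = H(Y_{t+h}\mid W_t) - H(Y_{t+h}\mid \mathcal{I}_t)$, which by definition equals the conditional mutual information. Assumption~1(ii) keeps all terms finite, so no $\infty-\infty$ issue arises. I expect this limiting or measure-theoretic step to be the only real obstacle, and I would handle it by this citation rather than by reproving continuity.

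Non-negativity holds because conditional mutual information is an expected KL divergence:
\begin{equation*}
I(Y_{t+h}; R_t \mid W_t) = \mathbb{E}\bigl[D_{\mathrm{KL}}\bigl(P_{Y_{t+h},R_t\mid W_t}\,\|\,P_{Y_{t+h}\mid W_t}\otimes P_{R_t\mid W_t}\bigr)\bigr] \geq 0.
\end{equation*}
Alternatively, it follows from Corollary~5 with $T$ the projection onto the $p$-window. For the equality case, the expected KL divergence vanishes iff the joint conditional law factorises almost surely, that is, iff $Y_{t+h} \perp R_t \mid W_t$. Equivalently, $p_h^\star(\cdot\mid\mathcal{I}_t)$ depends on $\mathcal{I}_t$ only through $W_t$, which matches the sufficiency condition in Corollary~5.
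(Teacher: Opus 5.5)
Your proof is correct and follows the paper's argument: the chain rule splits $I(Y_{t+h};\mathcal{I}_t)$ into the $p$-window term $F(h;\mathcal{I}_t^{(p)})$ plus the conditional mutual information with the remote past, and that term's non-negativity, with equality exactly under conditional independence, gives the rest. Your truncation-and-limit justification of the chain rule for the infinite-dimensional remote past, and your finite-entropy alternative via equation~(1), add rigour that the paper leaves implicit; they are not a different route.
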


\begin{proof}
By the chain rule for mutual information,
\begin{align}
I(Y_{t+h}; \mathcal{I}_t) &= I(Y_{t+h}; Y_t, \ldots, Y_{t-p+1}) \notag\\
&\quad + I(Y_{t+h}; Y_{t-p}, Y_{t-p-1}, \ldots \mid Y_t, \ldots, Y_{t-p+1}).
\end{align}
The first term is $F(h; \mathcal{I}_t^{(p)})$. The second is non-negative by definition of conditional mutual information and equals zero iff the stated conditional independence holds.
\end{proof}

This gives the empirical programme an exact error budget: the gap between the estimated forecastability profile under a finite window and the true profile under the full history is a well-defined information-theoretic quantity. For processes with short memory, this gap vanishes for moderate~$p$. For processes with long-range dependence~\cite{beran1994statistics}, it may remain substantial. The choice of lag order $p$ connects to the embedding problem in nonlinear dynamics~\cite{tong1990nonlinear,kantz2004nonlinear}.

\subsection{Periodicity and non-monotonicity}

\begin{proposition}[Periodic forecastability]
Let $\{Y_t\}$ be a stationary process such that the conditional distribution $P(Y_{t+h} \mid \mathcal{I}_t)$ has period~$s$ in~$h$, in the sense that $P(Y_{t+h+s} \mid \mathcal{I}_t) = P(Y_{t+h} \mid \mathcal{I}_t)$ for all $h \geq 1$. Then $F(h+s; \mathcal{I}_t) = F(h; \mathcal{I}_t)$ for all $h \geq 1$.
\end{proposition}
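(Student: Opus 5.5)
The plan is to write $F(h;\mathcal{I}_t)$ as an expectation of a divergence and then observe that every ingredient in that expression is unchanged when $h$ is replaced by $h+s$.

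By (1) and the chain rule for KL divergence,
\begin{equation*}
F(h;\mathcal{I}_t) = I(Y_{t+h};\mathcal{I}_t) = \mathbb{E}\!\left[D_{\mathrm{KL}}\!\left(p_h^\star(\cdot\mid\mathcal{I}_t)\,\big\|\,p_h(\cdot)\right)\right],
\end{equation*}
where $p_h$ denotes the marginal law of $Y_{t+h}$ and the outer expectation is taken over $\mathcal{I}_t$. Equivalently, $F(h;\mathcal{I}_t)$ is a functional of the joint law of the pair $(Y_{t+h},\mathcal{I}_t)$. That joint law is the law of $\mathcal{I}_t$ composed with the kernel $p_h^\star(\cdot\mid\mathcal{I}_t)$. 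Assumption~1(i)--(ii) guarantees that this representation is well-defined and finite.

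The first step handles the kernel. The hypothesis $P(Y_{t+h+s}\mid\mathcal{I}_t)=P(Y_{t+h}\mid\mathcal{I}_t)$ says $p_{h+s}^\star(\cdot\mid\mathcal{I}_t)=p_h^\star(\cdot\mid\mathcal{I}_t)$ almost surely. The law of $\mathcal{I}_t$ does not depend on the horizon at all. Hence the joint laws of $(Y_{t+h+s},\mathcal{I}_t)$ and $(Y_{t+h},\mathcal{I}_t)$ coincide.

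The second step handles the marginal. Because the joint laws coincide, the marginals coincide as well: $p_{h+s}=p_h$, obtained by integrating the kernel against the law of $\mathcal{I}_t$. This also follows directly from stationarity. It follows that $H(Y_{t+h+s})=H(Y_{t+h})$. Taking expectations of $-\log p_h^\star$ under the same joint law gives $H(Y_{t+h+s}\mid\mathcal{I}_t)=H(Y_{t+h}\mid\mathcal{I}_t)$. Subtracting via (1) yields $F(h+s;\mathcal{I}_t)=F(h;\mathcal{I}_t)$. Stationarity (Assumption~1(iii)) then makes this hold for every $t$, so the statement is a property of the profile as a function of $h$ alone.

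The main obstacle is interpretive rather than technical. The hypothesis has to be read as equality of regular conditional distributions holding $\mathcal{I}_t$-almost surely, not merely equality of the marginal laws of $Y_{t+h}$ and $Y_{t+h+s}$. With the weaker reading only $H(Y_{t+h})$ would transfer, and the argument would fail. I would state this reading explicitly and note that a version of each kernel can be chosen, since densities exist by Assumption~1(i). Once it is fixed, the argument reduces to the fact that mutual information depends only on the joint distribution.
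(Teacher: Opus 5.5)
Your argument is correct and is essentially the paper's proof: both establish $H(Y_{t+h+s}\mid\mathcal{I}_t)=H(Y_{t+h}\mid\mathcal{I}_t)$ and $H(Y_{t+h+s})=H(Y_{t+h})$, then subtract using (1). The only difference is in the marginal-entropy step: you get it from the coincidence of the joint laws of $(Y_{t+h+s},\mathcal{I}_t)$ and $(Y_{t+h},\mathcal{I}_t)$, which shows stationarity is not actually needed there, whereas the paper invokes stationarity directly.
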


\begin{proof}
Under the stated periodicity, $H(Y_{t+h+s} \mid \mathcal{I}_t) = H(Y_{t+h} \mid \mathcal{I}_t)$ and $H(Y_{t+h+s}) = H(Y_{t+h})$ by stationarity. Therefore $F(h+s; \mathcal{I}_t) = H(Y_{t+h+s}) - H(Y_{t+h+s} \mid \mathcal{I}_t) = H(Y_{t+h}) - H(Y_{t+h} \mid \mathcal{I}_t) = F(h; \mathcal{I}_t)$.
\end{proof}

The forecastability profile therefore need not be monotone: a process with seasonal period~$s$ will show recurring peaks at multiples of~$s$. The proof follows directly from the definitions; the substantive content is the implication that scalar diagnostics, which compress the dependence structure into a single number, cannot detect this horizon-resolved structure. More generally, even without exact periodicity, strong seasonal or cyclical dependence can produce local peaks in forecastability at seasonal lags superimposed on a long-run decay. Under standard mixing conditions~\cite{bradley2005basic,ibragimov1978gaussian} with sufficient regularity, forecastability converges to zero as $h \to \infty$~\cite{gray2011entropy}, but the path need not be monotone at any finite range of horizons. This non-monotonicity is precisely what distinguishes the horizon-resolved profile from scalar diagnostics such as sample entropy or spectral entropy, which cannot identify horizons where predictive structure re-emerges after an intervening gap.

\subsection{Interpretation of the forecastability profile}

The forecastability profile partitions the horizon space into regions where modelling effort has different expected returns. At horizons where forecastability is large, improvements in model specification can translate into measurable gains. At horizons where forecastability is small, the expected reduction in log loss achievable by any method is bounded by the forecastability itself, and improvements in model complexity yield diminishing returns. Rather than necessarily indicating model inadequacy, weak performance relative to the unconditional baseline may reflect the absence of exploitable structure in the declared information set. The profile therefore separates model-limited regimes from information-limited regimes.

\subsection{Illustrative examples}

Two Gaussian AR processes illustrate the forecastability profile and connect it to the structural properties derived above.

\textit{Example 1 (AR(1): monotone decay).} Let $Y_t = \phi Y_{t-1} + \varepsilon_t$ with $\varepsilon_t \sim \mathcal{N}(0, \sigma^2)$ and $|\phi| < 1$. The process is Markov, so $\mathcal{I}_t = \sigma(Y_t)$ is sufficient for the full history. The conditional distribution of $Y_{t+h}$ given $Y_t = y$ is $\mathcal{N}(\phi^h y,\; \sigma^2(1 - \phi^{2h})/(1 - \phi^2))$, and the marginal is $\mathcal{N}(0, \sigma^2/(1 - \phi^2))$. The forecastability profile is
\begin{equation}
F(h) = -\tfrac{1}{2}\log(1 - \phi^{2h}).
\end{equation}
Since the population coefficient of determination is $R_h^2 = \phi^{2h}$, this is equivalently $F(h) = -\frac{1}{2}\log(1 - R_h^2)$, an instance of the Gaussian relationship developed in Section~VIII.

Fig.~\ref{fig:profiles}(a) plots the profile for $\phi = 0.95$ and $\phi = 0.3$. For the weakly dependent process ($\phi = 0.3$), forecastability is $0.047$ nats at $h = 1$ and falls below $0.005$ nats by $h = 2$: no method can materially improve on the marginal predictor beyond the first step. For the strongly dependent process ($\phi = 0.95$), forecastability exceeds $0.5$ nats through $h = 10$ and remains above $0.1$ nats past $h = 30$, so modelling effort is justified across a wide horizon range. The contrast between the two profiles is the diagnostic in action: prior to fitting any model, the forecastability profile identifies where predictive improvement is available.

Since the AR(1) is Markov of order~1, the finite-window information loss (Proposition~6) satisfies $\Delta_p(h) = 0$ for all $p \geq 1$: a single lag captures all predictive information, and the truncation error budget is identically zero.

\textit{Example 2 (Seasonal AR: non-monotone profile).} Consider the multiplicative seasonal process
\begin{equation}
Y_t = \phi Y_{t-1} + \Phi Y_{t-s} - \phi\Phi Y_{t-s-1} + \varepsilon_t
\end{equation}
with $\phi = 0.5$, $\Phi = 0.8$, seasonal period $s = 12$, and $\varepsilon_t \sim \mathcal{N}(0, \sigma^2)$. This is the stationary Gaussian $\mathrm{AR}(1,0) \times (1,0)_{12}$ model. Consider first the restricted information set $\mathcal{I}_t = \sigma(Y_t)$, so that $F(h; \sigma(Y_t)) = -\frac{1}{2}\log(1 - \rho_h^2)$ where $\rho_h$ is the autocorrelation at lag~$h$.

Fig.~\ref{fig:profiles}(b) plots the resulting profile. It is non-monotone. Forecastability falls from $0.14$ nats at $h = 1$ to a minimum of approximately $0.0004$ nats near $h = 6$, where the AR(1) component has largely dissipated and the seasonal component has not yet engaged. It then rises sharply to $0.49$ nats at $h = 12$, where the annual dependence dominates. Further peaks appear at $h = 24$ ($0.25$ nats) and $h = 36$ ($0.13$ nats), attenuated by successive powers of~$\Phi$.

The profile partitions the horizon axis into qualitatively distinct regimes. At short horizons ($h = 1$--$3$), forecastability is dominated by the AR(1) component. At intermediate horizons ($h \approx 5$--$9$), the marginal predictor is nearly optimal and increased model complexity yields negligible returns. At seasonal horizons ($h \approx 12, 24, 36$), forecastability re-emerges; a method that fails to exploit the seasonal structure would show a low exploitation ratio $\chi_q$ at those horizons. This non-monotonicity is the phenomenon formalised in Proposition~7: scalar diagnostics that compress the dependence structure into a single number cannot detect this re-emergence.

The finite-window information loss (Proposition~6) is instructive here. The profile under $\mathcal{I}_t = \sigma(Y_t)$ misses all dependence beyond lag~1. Extending the information set to include the seasonal lag, $\mathcal{I}_t^{(13)} = \sigma(Y_t, \ldots, Y_{t-12})$, captures the seasonal structure and raises the profile at all horizons. The gap between the two profiles is the truncation budget $\Delta_p(h)$: it is negligible at short horizons (where the AR(1) component dominates) and large at seasonal horizons (where remote lags carry substantial conditional information). The error budget thus provides a principled criterion for lag-order selection.

For a non-Gaussian process with the same autocovariance structure, the forecastability profile would in general differ because mutual information captures dependence beyond second moments. By the maximum-entropy property of the Gaussian, the profile computed here is a lower bound on the forecastability of any process with identical autocovariances.

\begin{figure}[t]
\centering
\includegraphics[width=\columnwidth]{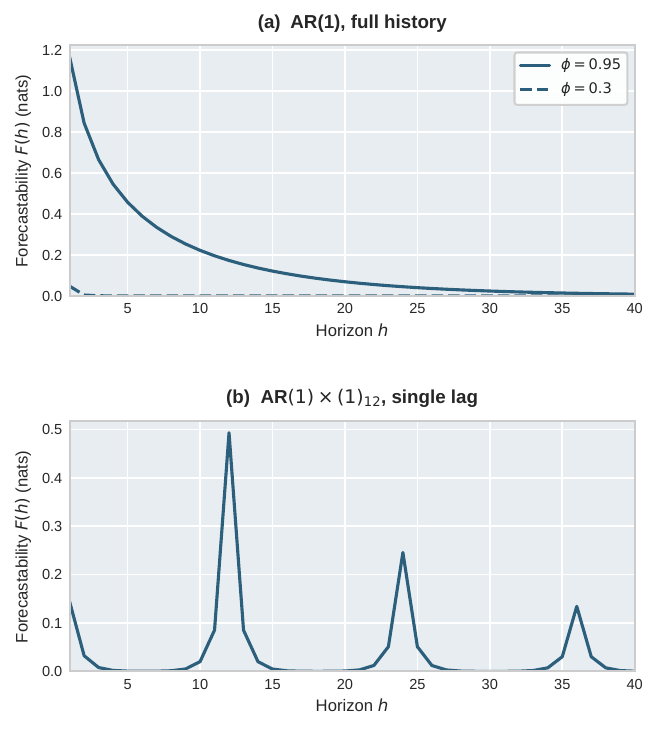}
\caption{Forecastability profiles for Gaussian processes. (a)~AR(1) with $\phi = 0.95$ (solid) and $\phi = 0.3$ (dashed), using the full history. (b)~Multiplicative seasonal $\mathrm{AR}(1) \times (1)_{12}$ with $\phi = 0.5$, $\Phi = 0.8$, using $\mathcal{I}_t = \sigma(Y_t)$ to illustrate non-monotonicity.}
\label{fig:profiles}
\end{figure}

\subsection{Complexity futility}

\begin{corollary}[Complexity futility]
At any horizon, the maximum achievable reduction in expected log loss relative to the unconditional baseline is $F(h; \mathcal{I}_t)$, regardless of the complexity of the forecasting method or the computational resources deployed.
\end{corollary}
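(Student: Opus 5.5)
The plan is to read the corollary as a restatement of Theorem~2 together with the horizon-specific bound (6), with one added observation. That observation is that nothing in those results depends on how $q_h$ is constructed. The only structural input is the following: any forecasting method, however complex, randomised, or computationally intensive, uses the information available at time~$t$ to output a predictive distribution $q_h(\cdot \mid \mathcal{I}_t)$.

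The first step is to formalise ``method.'' A method is any rule that maps the information available at time $t$ to a density $q_h(\cdot \mid \mathcal{I}_t)$ satisfying Assumption~1. Architecture, parameter count, training procedure and compute budget all enter only through the resulting map $\mathcal{I}_t \mapsto q_h(\cdot\mid\mathcal{I}_t)$. If the method uses internal randomisation (random initialisation, stochastic training), two cases follow.

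\begin{itemize}
\item If the randomness is independent of $Y_{t+h}$ given $\mathcal{I}_t$, I condition on it and apply the bound pointwise.
\item Alternatively, I treat the method's output as $T(\mathcal{I}_t, U)$ and note that $U \perp Y_{t+h} \mid \mathcal{I}_t$ gives $I(Y_{t+h}; \mathcal{I}_t, U) = I(Y_{t+h}; \mathcal{I}_t)$ by the chain rule. Corollary~5 then shows that no processing of this enlarged set increases the available information.
\end{itemize}

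The second step is to apply Theorem~1 to an arbitrary such $q_h$. This gives the reduction in expected log loss relative to the marginal predictor:
\[
H(Y_{t+h}) - \mathbb{E}[-\log q_h(Y_{t+h}\mid\mathcal{I}_t)] = F(h;\mathcal{I}_t) - \mathbb{E}[D_{\mathrm{KL}}(p_h^\star\|q_h)] \le F(h;\mathcal{I}_t).
\]
The third step is attainment: by Theorem~2 the supremum over all methods is reached at $q_h = p_h^\star$. The maximum reduction is therefore exactly $F(h;\mathcal{I}_t)$, and it is a quantity fixed by the joint law of $(Y_{t+h}, \mathcal{I}_t)$ alone. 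Since the bound holds for each $h$, it holds ``at any horizon.''

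The only real subtlety, and the step I expect to need care, is the first one. The phrase ``regardless of complexity'' must not smuggle in extra information. A method that secretly conditions on variables outside $\mathcal{I}_t$, such as leaked future data or exogenous covariates not declared, is not covered; it is a method for a different information set $\mathcal{I}_t'$, with its own bound $F(h;\mathcal{I}_t')$. I would state this explicitly: the corollary is conditional on the declared information set, and complexity can reduce the KL term but never raise the ceiling.
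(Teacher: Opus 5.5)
Your proposal is correct and follows the paper's own reasoning, which treats the corollary as an immediate consequence of Theorem~2 and the bound~(6). The argument is that a method of any complexity enters only through $q_h(\cdot\mid\mathcal{I}_t)$, so it can shrink only $\mathbb{E}[D_{\mathrm{KL}}(p_h^\star\|q_h)]$, while the ceiling $F(h;\mathcal{I}_t)$ is fixed by the declared information set. Your treatment of internal randomisation and undeclared information is extra care the paper leaves implicit; note that your first bullet (apply the bound pointwise in $U=u$, with the same baseline $H(Y_{t+h})$ and ceiling $F(h;\mathcal{I}_t)$) needs $U$ independent of $(Y_{t+h},\mathcal{I}_t)$ jointly, whereas under mere conditional independence it is your chain-rule version that does the work.
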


Increasing model complexity can reduce only the approximation component $\mathbb{E}[D_{\mathrm{KL}}(p_h^\star \| q_h)]$; it cannot alter the irreducible component $H(Y_{t+h} \mid \mathcal{I}_t)$, which is a property of the data and the information set. At horizons where forecastability is small, even a perfect method offers only a negligible gain over the unconditional baseline under logarithmic loss. The corollary does not imply that forecasts cannot be improved; it implies that improvement requires enriching the information set, not the model class.

\section{Decomposition of Predictive Loss}

\begin{proposition}[Loss decomposition]
For any forecasting procedure $q_h(\cdot \mid \mathcal{I}_t)$ satisfying Assumption~1, the expected log loss separates as
\begin{equation}
\mathbb{E}[-\log q_h(Y_{t+h} \mid \mathcal{I}_t)] = \underbrace{H(Y_{t+h} \mid \mathcal{I}_t)}_{\text{irreducible}} + \underbrace{\mathbb{E}[D_{\mathrm{KL}}(p_h^\star \| q_h)]}_{\text{approximation}}.
\end{equation}
\end{proposition}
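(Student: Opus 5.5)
The plan is to obtain Proposition~8 as a restatement of identity~(2) in Theorem~1, since the two displays are identical. The extra content is the interpretation of each term as ``irreducible'' and ``approximation'', so I will prove the identity and then justify the labels separately.

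First, under Assumption~1(i) the ratio $p_h^\star/q_h$ is well defined $\mu$-a.e. on the support of $p_h^\star$. I will add and subtract $\log p_h^\star(Y_{t+h}\mid\mathcal{I}_t)$ inside the expectation, as in the proof of Theorem~1, which splits the expected log loss into two terms.

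\begin{itemize}
\item The term $\mathbb{E}[-\log p_h^\star(Y_{t+h}\mid\mathcal{I}_t)]$ equals $H(Y_{t+h}\mid\mathcal{I}_t)$ by definition of conditional entropy.
\item The term $\mathbb{E}[\log(p_h^\star/q_h)]$ is handled by the tower property: conditioning first on $\mathcal{I}_t$, the inner expectation is taken with respect to $p_h^\star(\cdot\mid\mathcal{I}_t)$. It is therefore exactly $D_{\mathrm{KL}}(p_h^\star\|q_h)$ evaluated at the realised information set, and the outer expectation gives $\mathbb{E}[D_{\mathrm{KL}}(p_h^\star\|q_h)]$.
\end{itemize}

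The only real obstacle is justifying the split of the expectation, i.e.\ ruling out an $\infty-\infty$ form. Assumption~1(ii) handles this: the conditional entropy and the expected KL divergence are both finite, so the linearity step is legitimate. Moreover, the KL integrand has a nonnegative conditional expectation, which makes the use of the tower property safe even before integrability of the log-ratio itself is checked.

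Finally, I will justify the labels. The first term depends only on the joint law of $(Y_{t+h},\mathcal{I}_t)$, not on $q_h$, which is why it is called irreducible. The second term is nonnegative and vanishes exactly when $q_h=p_h^\star$ a.s., by Theorem~1, which is why it is called approximation error.
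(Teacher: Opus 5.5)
Your proposal is correct and follows the paper's route: the paper proves the loss decomposition in one line by citing identity~(2) of Theorem~1, and you re-derive that identity with the same add-and-subtract argument, adding integrability bookkeeping that the paper leaves implicit. Two small corrections to that bookkeeping: first, the fact that $q_h>0$ holds $p_h^\star$-a.e.\ comes from the finite expected KL in Assumption~1(ii), not from~(i), which as written asserts $q_h \ll p_h^\star$ (the wrong direction); second, what justifies the tower step is not the nonnegativity of the KL but the bound $\int p_h^\star \bigl(\log(q_h/p_h^\star)\bigr)_+\,d\mu \le \int (q_h-p_h^\star)_+\,d\mu \le 1$ on the conditional expectation of the negative part of $\log(p_h^\star/q_h)$.
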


\begin{proof}
Direct consequence of Theorem~1: the expected log loss of any predictive distribution equals the conditional entropy plus its expected KL divergence from the true conditional distribution.
\end{proof}

The first term is irreducible and fixed by the joint distribution of the process and information set. The second is approximation error and can, in principle, be reduced by better modelling. This decomposition parallels the calibration--refinement factorisation in probability forecasting~\cite{murphy1973new}, but is exact under log loss rather than approximate under other scoring rules.

Equivalently, the expected log loss can be written as
\begin{equation}
\mathbb{E}[-\log q_h] = H(Y_{t+h}) - F(h; \mathcal{I}_t) + \mathbb{E}[D_{\mathrm{KL}}(p_h^\star \| q_h)].
\end{equation}
This form identifies two distinct levers for improvement: enriching the information set to increase $F(h; \mathcal{I}_t)$, and improving the forecasting method to reduce the KL term. Even a perfect method can eliminate only the approximation term; the irreducible component is fixed by the information structure of the data.

For a specified method~$q_h$, the realised reduction relative to the unconditional baseline, $X_q(h; \mathcal{I}_t) = H(Y_{t+h}) - \mathbb{E}[-\log q_h(Y_{t+h} \mid \mathcal{I}_t)]$, may be termed the exploitability. The ratio $\chi_q = X_q / F$ gives the fraction of available forecastability that the method achieves. Forecastability is a property of the process and information set; exploitability is method-dependent and can only be evaluated post-modelling. The distinction echoes Granger's~\cite{granger1999empirical} separation of predictability from achieved forecast performance.

\section{Horizons of Low Forecastability}

At horizons where forecastability is small, the irreducible floor is high. When $F(h; \mathcal{I}_t)$ is close to zero,
\begin{equation}
H(Y_{t+h} \mid \mathcal{I}_t) \approx H(Y_{t+h}).
\end{equation}
At such horizons, the best possible probabilistic forecast is already close to the unconditional baseline in expected log loss. Low forecastability does not imply that prediction is impossible, only that the maximum achievable improvement over the unconditional baseline is small under logarithmic loss.

Two classical inequalities make the consequences precise. Fano's inequality~\cite{fano1961transmission} gives a lower bound on classification error: for any predictor $\hat{Y}_{t+h}$ of a discrete $Y_{t+h}$ taking $M$ values,
\begin{align}
P(\hat{Y}_{t+h} \neq Y_{t+h}) &\geq \frac{H(Y_{t+h} \mid \mathcal{I}_t) - 1}{\log M} \notag\\
&= \frac{H(Y_{t+h}) - F(h; \mathcal{I}_t) - 1}{\log M},
\end{align}
so when forecastability is small, the minimum achievable error probability is high regardless of method. Pinsker's inequality~\cite{pinsker1964information} bounds the total variation distance between the conditional and marginal predictive distributions:
\begin{equation}
\|p_h^\star(\cdot \mid \mathcal{I}_t) - p(\cdot)\|_{\mathrm{TV}} \leq \sqrt{\tfrac{1}{2}\, F(h; \mathcal{I}_t)},
\end{equation}
so when forecastability is small, the two distributions are close: conditioning on the information set barely changes the predictive distribution.

This is not a failure of the forecasting method; it is a consequence of the information structure. The standard response to poor forecasting accuracy, namely increasing model complexity, is effective only if the approximation component is the binding constraint. When it is the irreducible component that dominates, no amount of additional complexity can help: when forecastability is near zero, all methods converge toward the same floor. The minimum description length principle~\cite{rissanen1978modeling} provides a complementary perspective: model complexity is justified only when it captures genuine regularity. At low-forecastability horizons, the data do not support complex descriptions, and the optimal code length approaches the marginal entropy~\cite{barron1998minimum}.

\section{Estimation}

The results above are stated in terms of the true forecastability $F(h; \mathcal{I}_t)$. In practice, this must be estimated from data. Several approaches exist. Histogram-based estimators partition the observation space but suffer from binning artefacts~\cite{darbellay1999estimation}. The standard non-parametric approach is the $k$-nearest-neighbour estimator of Kraskov, St\"{o}gbauer and Grassberger~\cite{kraskov2004estimating}, which avoids explicit density estimation and is consistent under regularity conditions~\cite{paninski2003estimation}. Gao et al.~\cite{gao2015efficient} extend the KSG framework to improve performance under strong dependence, and Kozachenko and Leonenko~\cite{kozachenko1987sample} provide the foundational entropy estimator on which KSG builds. The main practical constraint is the dimensionality of $\mathcal{I}_t$: in low-dimensional lag-based settings, the estimator is reliable for moderate sample sizes, but in higher-dimensional settings, finite-sample bias and variance grow and may make the diagnostic as demanding as the modelling it is intended to precede. In principle, richer information sets can only increase true forecastability, but in finite samples they may make estimated forecastability less reliable. In practice, the diagnostic is most tractable in the univariate lag-based setting where $\mathcal{I}_t = \sigma(Y_t, \ldots, Y_{t-p+1})$ with $p \leq 3$, series lengths $n \geq 200$, and horizons where the effective sample size $n - h$ remains sufficient for stable estimation. The companion empirical work~\cite{catt2026diagnostic} validates the diagnostic in this regime across more than 42,000 series. Permutation-based significance testing~\cite{good2005permutation,theiler1992testing}, in which the dependence between past and future is destroyed by random shuffling, provides a natural null for determining whether estimated forecastability at a given horizon differs from zero~\cite{catt2026diagnostic}. Fuller estimation methodology is developed in the companion empirical work~\cite{catt2026diagnostic}. Formal estimation theory---including finite-sample convergence rates, minimax bounds, and the interaction between lag dimension and sample size---is not developed here and remains an open direction.

\section{Implications for Forecasting Practice}

Under logarithmic loss and for the declared information set, forecasting is constrained by information before it is constrained by modelling. The forecastability profile provides a theoretical basis for identifying where predictive structure exists prior to selecting and fitting a forecasting model class. The framework is diagnostic rather than prescriptive: it characterises the information-limited structure of the forecasting problem, but does not by itself determine a forecasting model or decision rule.

The data processing inequality (Corollary~5) has direct implications for model design: any feature extraction, dimensionality reduction, or data compression applied to the information set can only reduce the forecastability ceiling. The finite-window result (Proposition~6) gives an exact error budget for lag truncation, connecting the theoretical framework to the practical reality that forecasters always work with finite histories.

In many modern forecasting workflows~\cite{makridakis2020m4,petropoulos2022forecasting}, model development already involves competing specifications, repeated validation, and sometimes ensembling; the practical role of forecastability is to constrain that wider search, not to replace modelling altogether~\cite{fildes2009effective}. When forecastability is high, model improvement can reduce the approximation component and is therefore productive. When forecastability is low, the achievable improvement is tightly bounded, and increasing complexity risks overfitting without meaningful gain. The forecastability profile provides a principled basis for making this distinction.

Regime shifts alter the underlying dependence structure~\cite{clements1998forecasting,hamilton1989new} and can render previously estimated forecastability profiles unrepresentative. Reliable re-estimation may require substantial post-shift data, so forecastability should be interpreted locally in such settings.

\subsection{Illustrative example: seasonal information structure}

To illustrate how the profile changes a structural decision that a single diagnostic would miss, consider a monthly series with strong annual seasonality~\cite{hyndman2021fpp}. Suppose that $F(h)$ is substantial for $h \in \{1, 2, 3, 4, 11, 12, 13\}$ and negligible for $h \in \{5, 6, 7, 8, 9, 10\}$. A single first-crossing diagnostic would report $h = 5$ as the point at which prediction becomes uninformative. The forecastability profile reveals that this conclusion is wrong: horizons 11--13 contain substantial predictive information because the seasonal dependence re-emerges at the annual lag, as formalised in Proposition~7.

Now suppose an external leading indicator is added that is correlated with the target at horizons 6--8 but uncorrelated elsewhere. The updated profile may show substantial forecastability at $h \in \{1, 2, 3, 4, 6, 7, 8, 11, 12, 13\}$, expanding the productive horizons from seven to ten. This demonstrates that the forecastability profile is not a fixed property of the target series but changes structurally when the declared information set is enriched.

\subsection{Extension to alternative losses}

For Gaussian processes, the forecastability profile has a direct connection to the MSE-based predictability familiar from the Wiener--Granger tradition. If $(Y_{t+h}, \mathcal{I}_t)$ is jointly Gaussian, the forecastability at horizon~$h$ reduces to
\begin{equation}
F(h; \mathcal{I}_t) = -\tfrac{1}{2}\log(1 - R_h^2),
\end{equation}
where $R_h^2$ is the population $R^2$ from regressing $Y_{t+h}$ on $\mathcal{I}_t$. Granger-causal predictability under squared-error loss is captured by $R_h^2$; forecastability under log loss is a monotone transformation of it. In the Gaussian case, the two order series identically: a series that is more forecastable under log loss is also more predictable under MSE. In the non-Gaussian case, the two can diverge because $F$ captures nonlinear dependence that $R_h^2$, being a second-moment quantity, misses entirely.

More generally, logarithmic loss occupies a distinguished position among proper scoring rules: in binary and finite-alphabet settings, its associated divergence upper-bounds the divergences induced by broad classes of smooth, proper, convex losses up to normalisation~\cite{gneiting2007strictly,painsky2020universality,ehm2016quantiles}. Low forecastability under log loss therefore implies low forecastability under these losses as well, but the implication is one-directional and may be loose at moderate forecastability levels. For practitioners who evaluate forecasts under squared-error loss, CRPS, or quantile loss, the exact equality proved here does not transfer: forecastability under log loss bounds but does not equal the maximum achievable improvement under those losses. The exact equality is specific to log loss; for other losses, the relationship is an inequality rather than an identity. For continuous-outcome forecasting under squared-error loss, related bounds may be developed through entropy-based inequalities, but these are weaker and more model-dependent. A full loss-specific treatment belongs more naturally to a rate-distortion framework~\cite{berger1971rate,cover2006elements} and lies beyond the scope of the present paper.

\section{Conclusion}

Under logarithmic loss, the mutual information between the future and the declared information set is exactly the maximum achievable reduction in expected loss. This paper has developed the consequences of that identity for forecasting: the forecastability profile as a horizon-resolved diagnostic, its behaviour under information-set compression (Corollary~5), the exact error budget for finite-window truncation (Proposition~6), the inheritance of periodicity (Proposition~7), and the decomposition of predictive loss into irreducible and approximation components, with the forecastability exploitation ratio $\chi_q$ providing a normalised measure of how much predictive structure a given method captures. The framework provides a theoretical foundation for assessing whether the declared information set contains sufficient predictive information at the horizon of interest, prior to committing to a model class. Companion empirical work~\cite{catt2026diagnostic} validates the diagnostic against realised forecast accuracy across more than 42,000 time series: auto-mutual information computed strictly from training data exhibits Spearman rank correlations of $-0.41$ to $-0.72$ with out-of-sample sMAPE for five of six temporal frequencies, and median forecast error declines monotonically from low to high forecastability terciles across all frequencies and probe models. Open extensions include loss-specific bounds via rate-distortion theory, the behaviour of the profile under non-stationarity and regime change, computable bounds on the information loss for specific compression schemes (e.g.\ principal component projections or feature maps), and integration with established forecast evaluation protocols~\cite{diebold1995comparing,tashman2000outof}. The framework extends directly to multivariate information sets: all results are stated in terms of sub-$\sigma$-algebras and hold without modification when $\mathcal{I}_t$ includes vector-valued histories, exogenous variables, or mixed-frequency data. The central implication is that forecastability is a measurable property of the data, not an emergent property of the model. It can be quantified before any model is built, and it bounds what every model can achieve.

\end{document}